\newtheorem{assumption}{Assumption}
\Crefname{algocfline}{Algorithm}{Algorithms}
\Crefname{algocf}{line}{lines}
\Crefname{assumption}{Assumption}{Assumptions}
\newcommand{\mcl}[1]{\mathcal{#1}}
\newcommand{\mbb}[1]{\mathbb{#1}}
\providecommand{\ubar}[1]{\underline{#1}}
\providecommand{\obar}[1]{\overline{#1}}
\DeclareMathOperator*{\minimize}{\mathrm{minimize}}
\DeclareMathOperator*{\argmin}{\mathrm{argmin}}
\DeclareMathOperator*{\subjectto}{\mathrm{subject~to}}
\newcommand{\cf}{{\it cf.}}
\newcommand{\eg}{{\it e.g.}}
\newcommand{\ie}{{\it i.e.}}
\providecommand{\BR}{\mathrm{BR}}
\providecommand{\saf}{\mathrm{s}}
\providecommand{\des}{\mathrm{des}}
\newcommand{\ours}{IBR-GCS}
\begin{document}

\title{Game-Theoretic Autonomous Driving: \\ A Graphs of Convex Sets Approach}

\author{Nikolaj Käfer\inst{1,2}$^*$ \and
Ahmed Khalil\inst{2}$^*$ \and
Edward Huynh \inst{2}$^*$ \and
Efstathios Bakolas\inst{2}
\and David Fridovich-Keil\inst{2}}

\authorrunning{N. Käfer et al.}

\institute{ETH Zurich, Zurich, ZH 8092, Switzerland \\
\email{nkaefer@ethz.ch} \and
The University of Texas at Austin, Austin, TX 78712, USA \\
\email{\{akhalil, edhuynh, dfk\}@utexas.edu, bakolas@austin.utexas.edu}
}

\makeatletter 
\def\thanks#1{\protected@xdef\@thanks{\@thanks \protect\footnotetext{#1}}}
\thanks{$^{*}$Equal contribution}
\makeatother

\maketitle 

\begin{abstract}

Multi-vehicle autonomous driving couples strategic interaction with hybrid (discrete-continuous) maneuver planning under shared safety constraints. We introduce \ours, an Iterative Best Response (IBR) planning approach based on the Graphs of Convex Sets (GCS) framework that models highway driving as a generalized noncooperative game. \ours\ integrates combinatorial maneuver reasoning, trajectory planning, and game-theoretic interaction within a unified framework. The key novelty is a vehicle-specific, strategy-dependent GCS construction. Specifically, at each best-response update, each vehicle builds its own graph conditioned on the current strategies of the other vehicles, with vertices representing lane-specific, time-varying, convex, collision-free regions and edges encoding dynamically feasible transitions. This yields a shortest-path problem in GCS for each best-response step, which admits an efficient convex relaxation that can be solved using convex optimization tools without exhaustive discrete tree search. We then apply an iterative best-response scheme in which vehicles update their trajectories sequentially and provide conditions under which the resulting inexact updates converge to an approximate generalized Nash equilibrium. Simulation results across multi-lane, multi-vehicle scenarios demonstrate that \ours\ produces safe trajectories and strategically consistent interactive behaviors.

\keywords{Multi-Vehicle Systems \and Game-Theoretic Motion Planning\and Autonomous Driving }
\end{abstract}

\section{Introduction}

The deployment of self-driving cars promises improvements in safety, efficiency, and comfort in day-to-day transportation, but it also introduces fundamental challenges in planning with interactions. In mixed traffic, coordination is decentralized, and vehicle interactions combine cooperative and competitive elements: all vehicles must comply with shared traffic rules to ensure safety, yet strategic behavior naturally arises in scenarios such as merging, lane changing, and overtaking. To model the pursuit of individual objectives under shared constraints, noncooperative game theory has been widely used to cast autonomous driving as a complete-information game \cite{game_drive,game_lane,game_dec}. Compared to model predictive control approaches that ignore strategic interaction, game-theoretic formulations can better capture anticipatory behavior and improve performance in interactive scenarios \cite{out_MPC_game,Yan2023Decision}. A particularly useful subclass is (generalized) potential games, where unilateral incentives are aligned with a single scalar potential function, yielding algorithmic and theoretical tractability \cite{Bhatt_2025,pot_ilqr,pot_game_drive,pot_urb}. The versatility of potential games has led to their adoption in diverse multi-agent settings~\cite{bakolas2021TaskAssignment,lee2021RelayPursuit,best_response_drone,dist_potential}.

A key difficulty in multi-vehicle planning is that it is inherently hybrid: discrete decisions arise from lane selection and maneuver choices, while continuous decisions are governed by vehicle dynamics. This hybrid structure is often modeled via mixed-integer programs \cite{DeitsTedrake2015MIP,EarlDAndrea2002MILP,Mixed_Pot_game,RichardsHow2002MILP}, but such formulations can tightly couple discrete and continuous variables, limiting scalability and making it difficult to exploit problem structure in large multi-vehicle settings \cite{combin_drive,PadenCYYF16}.

To address these challenges, we leverage Graphs of Convex Sets (GCS) \cite{GCS_thesis,GCS}, which is a recently proposed optimization framework that has been successfully applied to trajectory optimization and motion planning problems with collision avoidance \cite{Kurtz2023GCSMotionPlanning,GCS_traj,GCS_SPP}. In GCS, discrete decisions are represented by a graph topology: vertices correspond to convex feasible regions ($\eg$, safe sets for each vehicle), and edges encode feasible transitions between them ($\eg$, collision-free trajectories from one feasible region to another). This representation preserves convexity within regions and along transitions, and it admits a convex relaxation that is often tight in practice \cite{GCS}, enabling efficient computation while capturing combinatorial structure.

In this work, we present \ours, a GCS-based method for multi-vehicle highway driving that models interaction as a noncooperative game with potentially conflicting objectives while respecting traffic rules. The proposed approach, \ours, integrates combinatorial maneuver reasoning, trajectory planning, and game-theoretic interaction within a unified framework. At each iteration, vehicle $i$ constructs a strategy-dependent GCS whose vertices represent lane-specific collision-free regions over time and whose edges encode dynamically feasible, safety-preserving transitions. Given the current strategies of the other vehicles, the resulting best-response computation reduces to a shortest-path problem (SPP) in GCS, yielding a mixed-integer convex formulation with an efficient convex relaxation that is often tight in practice. We embed these updates in an Iterative Best-Response (IBR) scheme and provide conditions under which the resulting inexact best-response dynamics converge to an approximate generalized Nash equilibrium (GNE). We evaluate \ours\ in numerical simulations of multi-lane highway scenarios, demonstrating safe trajectories and strategically consistent behaviors.

The remainder of the paper is organized as follows. Section~\ref{sec_preliminaries} reviews graphs of convex sets and generalized potential games. Section~\ref{sec_problem_formulation} formalizes the multi-vehicle driving game and states the key assumptions. Section~\ref{sec_approach} presents \ours, including the strategy-dependent GCS construction, the iterative best-response algorithm, and the accompanying error analysis. Section \ref{sec_simulation_results} presents simulation results. Finally, Section \ref{sec_conclusion} discusses conclusions.

\section{Preliminaries} \label{sec_preliminaries}

This section reviews GCS and generalized potential games, which form the basis for the proposed approach.

\subsection{Shortest Path Problem in Graphs of Convex Sets} \label{sec_gcs}

The GCS framework provides a graph-based representation for optimization problems that combine continuous decision variables with a combinatorial structure. The framework was originally introduced for trajectory optimization and motion planning with obstacle avoidance. A detailed formulation can be found in \cite{GCS_thesis,GCS}.

\begin{definition}[Graph of Convex Sets]
A graph of convex sets is a directed graph $\mcl{G}=(\mcl{V}, \mcl{E})$ with vertex set $\mcl{V}$ and edge set $\mcl{E}$, where each vertex $V \in \mcl{V}$ is associated with a convex set $\mcl{X}_V \subset \mbb{R}^n$, and each directed edge $e = (U,W) \in \mcl{E}$ represents an admissible one-step move from vertex $U$ to vertex $W$ ($\ie$, between the sets $\mcl{X}_U$ and $\mcl{X}_W$).
\end{definition}

In the relevant literature, vertices typically correspond to convex feasible regions that are subsets of the state or trajectory space, while edges encode admissible transitions between regions, such as continuity, dynamical feasibility, or safety constraints \cite{Kurtz2023GCSMotionPlanning,GCS_traj}.

Each vertex $V\in\mcl{V}$ is associated with a continuous decision variable $x_V \in \mcl{X}_V$. For each edge $e =(U,W) \in \mcl{E}$, convex constraints may couple $(x_U,  x_W)$ via $(x_U, x_W) \in \mcl{X}_e$, where $\mcl{X}_e \subset \mbb{R}^n \times \mbb{R}^n$ is a convex set. Each edge may also be assigned a convex nonnegative cost function $c_e: \mcl{X}_e \to \mbb{R}_{\ge 0}$. Additionally, vertices may be assigned convex nonnegative costs $c_V: \mcl{X}_V \to \mbb{R}_{\ge 0}$.

We recall the definition of a path in a graph of convex sets.

\begin{definition}[Path]
Given a source vertex $V_0$ and a target vertex $V_K \neq V_0$, a path is a sequence of vertices $(V_0, \dots, V_K)$ and directed edges such that $(V_{k-1}, V_k) \in \mcl{E}$ for all $k=1,\dots,K$.
\end{definition}

To find a minimum-cost path through $\mcl{G}$, we introduce binary decision variables $z_e \in \{0,1\}$ for all $e \in \mathcal{E}$ to indicate whether an edge $e$ is selected, and vertex-selection variables $y_V \in [0,1]$ for all $V \in \mathcal{V}$ (equal to $1$ for vertices $V$ on the selected path and $0$ otherwise under the flow constraints below). For each vertex $V \in \mcl{V}$, let $\mcl{E}^{\mathrm{in}}_V \coloneqq \{(U, V) \in \mcl{E}\}$ and $\mcl{E}^{\mathrm{out}}_V \coloneqq \{(V, W) \in \mcl{E}\}$ denote its sets of incoming and outgoing edges, respectively. A standard mixed-integer formulation of the GCS shortest path problem is as follows:
\begin{subequations} \label{eq_gcs_shortest_path}
\begin{align}
& \minimize_{\substack{\{x_V, y_V\}_{V \in \mcl{V}},\\ \{z_e\}_{e \in \mcl{E}},\\}}
&& \sum_{e=(U,W) \in \mcl{E}} z_e\, c_e(x_U, x_W) + \sum_{V\in\mcl{V}} y_V\, c_V(x_V) \label{eq_gcs_shortest_path_obj} \\
& \subjectto
&& \sum_{e \in \mcl{E}^{\mathrm{in}}_V} z_e + \delta_{V,V_{0}} = \sum_{e \in \mcl{E}^{\mathrm{out}}_V} z_e + \delta_{V,V_{K}}, && \forall V \in \mcl{V}, \label{eq_gcs_shortest_path_flow} \\
&&& y_V = \delta_{V,V_0} + \sum_{e\in\mcl{E}^{\mathrm{in}}_V} z_e, && \forall V\in\mcl{V}, \label{eq_gcs_shortest_path_ydef}\\
&&& x_V \in \mcl{X}_V, && \forall V \in \mcl{V}, \label{eq_gcs_shortest_path_feas} \\
&&& z_e = 1 \ \Rightarrow\ (x_U,x_W)\in\mcl{X}_e, && \forall e=(U,W)\in\mcl{E}, \label{eq_gcs_shortest_path_edgefeas} \\
&&& z_e \in \{0,1\}, && \forall e \in \mcl{E}, \label{eq_gcs_shortest_path_binary}\\
&&& 0 \le y_V \le 1, && \forall V\in\mcl{V}. \label{eq_gcs_shortest_path_ybox}
\end{align}
\end{subequations}
Here, $\delta_{V, W}$ denotes the Kronecker delta, where $\delta_{V, W}=1$, if $V = W$, and $\delta_{V, W}=0$, otherwise. In \eqref{eq_gcs_shortest_path_flow}, the terms $\delta_{V, V_0}$ and $\delta_{V, V_K}$ therefore create the required unit flow imbalance at the source vertex $V_0$ and target vertex $V_K$. In contrast to the classical SPP \cite{short_path}, the GCS SPP is NP-hard in general, $\cf$ \cite[Theorem 3.1]{GCS}.

\subsubsection{Convex Relaxation}
A standard convex relaxation replaces \eqref{eq_gcs_shortest_path_binary} with the box constraint $0 \le z_e \le 1$ and converts the implications in \eqref{eq_gcs_shortest_path_edgefeas} into a convex perspective (or conic) formulation; see \cite{GCS_thesis,GCS} for canonical constructions. This relaxation can be tight in many motion-planning instances, but in general it may be loose \cite[Proposition 8.1]{GCS_thesis}. Existing formal tightness guarantees are limited to restrictive settings \cite[Proposition 8.2]{GCS_thesis}. In this paper, we empirically observe tightness in our setting ($\cf$ Section \ref{sec_simulation_results}), but we do not claim a general tightness guarantee.

\subsection{Generalized Potential Games} \label{sec_potential_games}

We consider a game with $N$ vehicles, indexed by $\mcl{N} \coloneqq \{1, \ldots, N\}$. Vehicle $i \in \mcl{N}$ chooses a strategy $\theta_{i}$, which in our driving setting represents a finite-horizon motion plan (trajectory and lane-change actions) and will later be encoded as a path through a vehicle-specific graph of convex sets. Vehicle $i$ incurs a cost $J_{i} (\theta_{i}, \theta_{-i})$, where $\theta_{-i} \coloneqq (\theta_1, \dots, \theta_{i-1}, \theta_{i+1}, \dots, \theta_N)$ denotes the strategies of all other vehicles. In our setting, feasibility depends on other vehicles through coupled constraints, so we use a \textit{generalized} strategy set of the form
\begin{align}\label{generalized_strategy_set}
\Theta_{i} (\theta_{-i}) \coloneqq \{\theta_i \mid (\theta_i,\theta_{-i}) \in \Theta\},
\end{align}
where $\Theta$ is the joint feasible set. Each vehicle's cost is a function $J_i : \Theta \to \mbb{R}$, evaluated on feasible joint strategies $\theta=(\theta_i, \theta_{-i}) \in \Theta$.

Collecting the vehicles, costs, and coupled feasibility constraints yields a game in which each vehicle solves an optimization problem whose feasible set depends on the other vehicles' strategies.

\begin{definition}[Game] \label{def_game}
The  game is denoted by $G$ and is specified by the following set of coupled optimization problems:
\begin{equation} \label{eq_game_def}
G \coloneqq \left\{
\begin{aligned}
\minimize_{\theta_i}\quad & J_i(\theta_i,\theta_{-i})\\
\subjectto \quad & \theta_i \in \Theta_i(\theta_{-i}),
\end{aligned}
\right. \qquad \forall i \in \mcl{N}.
\end{equation}
\end{definition}

A central solution concept for games of this form is the Generalized Nash Equilibrium (GNE), in which no vehicle can unilaterally reduce its cost while satisfying the feasibility constraints imposed by the other vehicles' strategies.

\begin{definition}[Generalized Nash Equilibrium, GNE] \label{def_GNE}
A strategy profile $\theta^\star \in \Theta$ is a GNE of $G$ if, for every vehicle $i \in \mcl{N}$,
\begin{equation} \label{eq_GNE_def}
J_i(\theta_i^\star, \theta_{-i}^\star) \le J_i(\theta_i,\theta_{-i}^\star),
\qquad \forall \theta_i \in \Theta_i(\theta_{-i}^\star).
\end{equation}
\end{definition}

To analyze the existence of equilibria and the behavior of best-response dynamics, it is often useful to identify games whose incentives can be summarized by a single scalar function. This motivates the notion of a generalized potential game, which extends exact potential games to settings with coupled feasibility constraints \cite{pot}.

\begin{definition}[Generalized potential game]\label{def_GPG}
The game $G$ is a generalized potential game if there exists a function $\Phi:\Theta \to \mbb{R}$ such that for all $i \in \mcl{N}$ and all $\theta_{-i}$ for which $\Theta_i(\theta_{-i})\neq\emptyset$, it holds for all $\theta_i,\theta_i'\in \Theta_i(\theta_{-i})$ that
\begin{equation}\label{eq_GPG_def}
J_i(\theta_i',\theta_{-i}) - J_i(\theta_i,\theta_{-i}) = \Phi(\theta_i',\theta_{-i}) - \Phi(\theta_i,\theta_{-i}).
\end{equation}
The function $\Phi$ is called a \emph{generalized potential}.
\end{definition}

We next introduce the iterative best-response (IBR) dynamics that will be used in Section \ref{sec_approach}. Given $\theta_{-i}$, the set of best responses for each vehicle $i \in \mcl{N}$, denoted $\BR_i(\theta_{-i})$, is given by
\begin{equation}\label{eq_best_response}
\BR_i(\theta_{-i}) \coloneqq \argmin_{\theta_i \in \Theta_i(\theta_{-i})} J_i(\theta_i,\theta_{-i}).
\end{equation}
Starting from an initial feasible profile $\theta^{0} \in \Theta$, IBR updates vehicles sequentially. We define
\begin{equation}\label{eq_theta_intermediate}
\theta^{k,i} \coloneqq (\theta_1^{k+1}, \ldots, \theta_i^{k+1}, \theta_{i+1}^{k}, \ldots, \theta_N^{k}),
\qquad i = 1,\ldots,N,
\end{equation}
for the intermediate profile after vehicles $1, \ldots, i$ have been updated during iteration $k$. With this, the update rule is
\begin{equation} \label{eq_IBR_update}
\theta_i^{k+1} \in \BR_i(\theta_{-i}^{k,i-1}), \qquad i=1, \ldots, N, \quad k \ge 0,
\end{equation}
with $\theta^{k,0}=\theta^{k}$ at the beginning of iteration $k$ and $\theta^{k,N}=\theta^{k+1}$ at the end of iteration $k$. In a generalized potential game, any exact best-response update is a descent step for the potential function $\Phi$. A standard result is given below, $\cf$ \cite{pot}.

\begin{proposition}[Potential descent under exact IBR]\label{prop_potential_descent}
Suppose $G$ is a generalized potential game with generalized potential $\Phi$ (Definition \ref{def_GPG}). If, at some intermediate profile $\theta^{k,i-1}$, vehicle $i \in \mcl{N}$ performs an exact best-response update \eqref{eq_best_response}, then
\begin{align*}
\Phi(\theta^{k,i}) \le \Phi(\theta^{k,i-1}).
\end{align*}
Consequently, if all vehicles update with exact best responses, the sequence $\{\Phi(\theta^k)\}_{k\ge 0}$ is monotonically non-increasing along IBR, and therefore convergent, if $\Phi$ is bounded below on $\Theta$.
\end{proposition}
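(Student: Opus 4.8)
The plan is to establish the single-step descent inequality first and then obtain the monotonicity and convergence claim by chaining the per-step inequalities across the $N$ sequential updates within each iteration. Throughout, I abbreviate the fixed opponent profile seen by vehicle $i$ during its update as $\sigma \coloneqq \theta_{-i}^{k,i-1}$, and I observe that the two profiles in the claim differ only in their $i$-th component, namely $\theta^{k,i-1} = (\theta_i^k, \sigma)$ while $\theta^{k,i} = (\theta_i^{k+1}, \sigma)$. Hence it suffices to compare $\Phi$ at these two points, which share the same opponent block $\sigma$.

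The core argument rests on three facts. First, the old strategy $\theta_i^k$ is feasible for vehicle $i$'s problem at $\sigma$, i.e.\ $\theta_i^k \in \Theta_i(\sigma)$; this holds precisely because $\theta^{k,i-1} = (\theta_i^k,\sigma) \in \Theta$, so the best-response problem is well posed ($\Theta_i(\sigma)\neq\emptyset$) and the potential equality of Definition~\ref{def_GPG} is applicable at $\sigma$. Second, since $\theta_i^{k+1}\in\BR_i(\sigma)$ minimizes $J_i(\cdot,\sigma)$ over $\Theta_i(\sigma)$, testing the minimizer against the feasible comparison point $\theta_i^k$ yields
\begin{equation*}
J_i(\theta_i^{k+1},\sigma) \le J_i(\theta_i^k,\sigma).
\end{equation*}
Third, applying \eqref{eq_GPG_def} with $\theta_i = \theta_i^k$ and $\theta_i' = \theta_i^{k+1}$ (both in $\Theta_i(\sigma)$) converts this cost decrease into an identical potential decrease,
\begin{equation*}
\Phi(\theta_i^{k+1},\sigma) - \Phi(\theta_i^k,\sigma) = J_i(\theta_i^{k+1},\sigma) - J_i(\theta_i^k,\sigma) \le 0,
\end{equation*}
which is exactly $\Phi(\theta^{k,i}) \le \Phi(\theta^{k,i-1})$.

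For the consequence, I would chain these inequalities across $i = 1,\dots,N$ within iteration $k$, using $\theta^{k,0}=\theta^k$ and $\theta^{k,N}=\theta^{k+1}$, to obtain
\begin{equation*}
\Phi(\theta^{k+1}) = \Phi(\theta^{k,N}) \le \Phi(\theta^{k,N-1}) \le \cdots \le \Phi(\theta^{k,0}) = \Phi(\theta^k).
\end{equation*}
Thus $\{\Phi(\theta^k)\}_{k\ge 0}$ is monotonically non-increasing, and a non-increasing real sequence that is bounded below on $\Theta$ converges by the monotone convergence theorem.

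The step requiring the most care is the feasibility bookkeeping underlying the first fact, since in a generalized game the strategy set $\Theta_i(\cdot)$ shifts as the opponents update, and it is not a priori obvious that vehicle $i$'s stale strategy $\theta_i^k$ remains admissible after vehicles $1,\dots,i-1$ have moved. I would resolve this by an induction on $i$ showing $\theta^{k,i-1}\in\Theta$: the base case $\theta^{k,0}=\theta^k\in\Theta$ is the standing feasibility of the iterate, and the inductive step follows because vehicle $(i-1)$'s best response satisfies $\theta_{i-1}^{k+1}\in\Theta_{i-1}(\theta_{-(i-1)}^{k,i-2})$, which by the definition \eqref{generalized_strategy_set} of the generalized strategy set means the full updated profile $\theta^{k,i-1}$ lies in $\Theta$ while still carrying the unchanged component $\theta_i^k$. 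Membership $\theta^{k,i-1}\in\Theta$ then delivers both $\theta_i^k\in\Theta_i(\sigma)$ and the nonemptiness of $\Theta_i(\sigma)$ needed to invoke Definition~\ref{def_GPG}.
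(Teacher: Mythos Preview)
Your proof is correct. The paper itself does not supply a proof of this proposition; it simply labels it a standard result and cites \cite{pot}, so there is no in-paper argument to compare against. Your treatment is in fact more careful than what the paper requires of the reader: the induction showing $\theta^{k,i-1}\in\Theta$ (and hence $\theta_i^k\in\Theta_i(\sigma)$) is exactly the point that can be glossed over in the standard-game setting but needs checking once feasibility is coupled through $\Theta_i(\theta_{-i})$, and you handle it cleanly via \eqref{generalized_strategy_set}.
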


\section{Problem Formulation}\label{sec_problem_formulation}

We consider a multi-vehicle highway driving scenario with $N$ vehicles indexed by $\mcl{N}\coloneqq\{1,\ldots,N\}$ operating on a set of straight lanes $\mcl{L}\subset \mbb{N}$, as illustrated in Fig.~\ref{fig_vehicles}. Planning is performed over a finite-horizon $\mcl{T}\coloneqq\{0,\ldots,T-1\}$ with a discretization step $\Delta t \in \mbb{R}_{> 0}$ and transition index set $\mcl{T}^{-}\coloneqq\{0,\ldots,T-2\}$. Vehicle $i\in\mcl{N}$ has longitudinal position $s_i(t)$, speed $v_i(t)$, acceleration $a_i(t)$, lane index $z_i(t)\in\mcl{L}$, and blinker (lane-change action) $b_i(t)\in\mcl{B}\coloneqq\{-1,0,1\}$, where $b_i(t)=-1$ indicates a lane change to the right, $b_i(t)=0$ indicates staying in lane, and $b_i(t)=1$ indicates a lane change to the left.

\begin{figure}[H]
\hfill
\begin{minipage}[b]{0.48\linewidth}
\centering
\includegraphics[width=\linewidth]{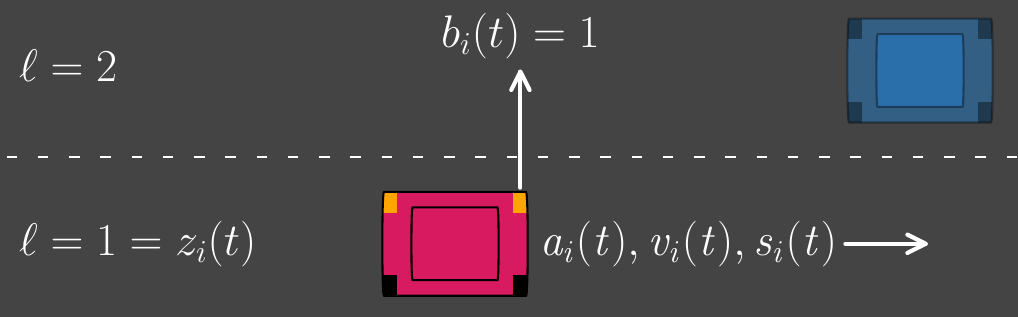}
\end{minipage}
\hfill
\begin{minipage}[b]{0.48\linewidth}
\centering
\includegraphics[width=\linewidth]{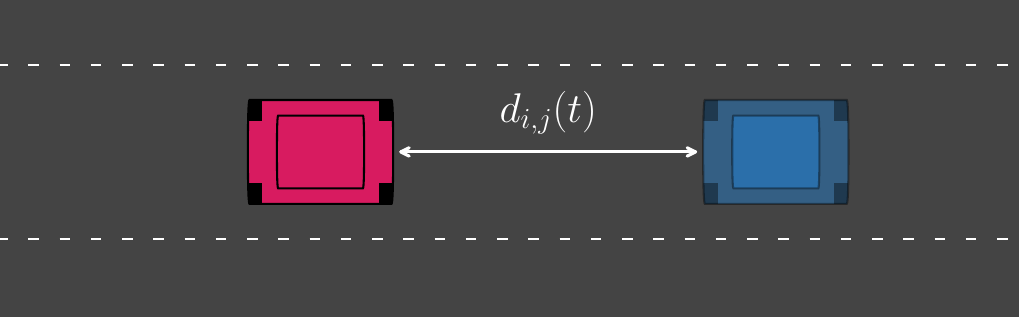}
\end{minipage}
\caption{Vehicles operating on a highway with vehicle-specific variables (left) and longitudinal distance (right).}
\label{fig_vehicles}
\end{figure}

\subsubsection{Dynamics and bounds:}
The motion of vehicle $i\in\mcl{N}$ is modeled by the discrete-time dynamics:
\begin{equation}\label{eq_dynamics}
\begin{aligned}
s_i(t{+}1) &= s_i(t) + \Delta t\, v_i(t),\\
v_i(t{+}1) &= v_i(t) + \Delta t\, a_i(t),
\end{aligned}
\qquad \forall t \in \mcl{T}^{-}.
\end{equation}
We enforce state bounds:
\begin{equation}\label{eq_state_bounds}
s_i(t)\in[\,\ubar{s},\ \obar{s}\,], \qquad
v_i(t)\in[\,\ubar{v}_i,\ \obar{v}_i\,], \qquad \forall t\in\mcl{T},
\end{equation}
where $\ubar{s}$ and $\obar{s}$ denote the endpoints of the road segment and $\ubar{v}_i,\obar{v}_i$ are the minimum and maximum allowable speeds for vehicle $i$. Acceleration is bounded by actuation limits
\begin{equation}\label{eq_actuation_bounds}
a_i(t)\in[\,\ubar{a}_i,\ \obar{a}_i\,], \qquad \forall t\in\mcl{T}^{-},
\end{equation}
with $\ubar{a}_i,\obar{a}_i$ denoting the minimum and maximum admissible accelerations of vehicle $i$.

\subsubsection{Lane evolution:}
Lane changes are restricted to adjacent lanes and are commanded by the blinker:
\begin{equation}\label{eq_lane_evolution}
z_i(t{+}1)= z_i(t)+b_i(t),\qquad
b_i(t)\in\mcl{B},\qquad
z_i(t{+}1)\in\mcl{L},\qquad \forall t\in\mcl{T}^{-}.
\end{equation}

\subsubsection{Relative coordinates:}
For any pair of vehicles $i,j\in\mcl{N}$, define the longitudinal and lateral offsets:
\begin{equation}\label{eq_relative_coords}
d_{i,j}(t)\coloneqq s_j(t)-s_i(t), \qquad
z_{i,j}(t)\coloneqq z_j(t)-z_i(t),
\end{equation}
as shown in Fig.~\ref{fig_vehicles}. The longitudinal offset, in view of \eqref{eq_dynamics}, evolves as:
\begin{equation}\label{eq_relative_dynamics}
d_{i,j}(t{+}1)=d_{i,j}(t)+\Delta t\bigl(v_j(t)-v_i(t)\bigr), \qquad \forall t\in\mcl{T}^{-}.
\end{equation}

\subsubsection{Safety rules:}
To prevent collisions, we impose longitudinal safety for vehicles in the same lane and lateral safety for
vehicles in adjacent lanes.

\paragraph{Rule 1 (Longitudinal safety in the same lane).}
If $z_{i,j}(t)=0$, then vehicles must maintain a minimum longitudinal separation, $\ie$,
\begin{equation}\label{eq_rule1}
|d_{i,j}(t)| \ge d_i^{\saf}, \qquad \forall t\in\mcl{T},
\end{equation}
where $d_i^{\saf}>0$ is the required safety gap for vehicle $i$.

\paragraph{Rule 2 (Lateral safety across adjacent lanes).}
If vehicles are side-by-side in adjacent lanes, $\ie$,
\begin{equation}\label{eq_rule2_condition}
|d_{i,j}(t)| \le d_i^{\saf}, \qquad |z_{i,j}(t)|=1, \qquad \forall t\in\mcl{T},
\end{equation}
then simultaneous swaps into each other's lanes are forbidden:
\begin{equation}\label{eq_rule2}
z_i(t{+}1)\neq z_j(t), \qquad z_j(t{+}1)\neq z_i(t), \qquad \forall t\in\mcl{T}^{-}.
\end{equation}

\subsubsection{Preferences and objectives:}
Each vehicle $i\in\mcl{N}$ is assigned a desired cruising speed $v_i^{\des}\in[\,\ubar{v}_i,\ \obar{v}_i\,]$ and a desired lane $\ell_i^{\des}\in\mcl{L}$. We also fix weights $w_{i,v},w_{i,\ell},w_{i,a},w_{i,b}\in\mbb{R}_{> 0}$ that trade off speed tracking,
lane preference, control effort, and lane-change usage, respectively. These parameters are used to define
the edge and terminal costs in the GCS construction in Section~\ref{sec_approach}.

\subsection{Individual vehicle strategy and optimization problem}\label{subsec:single_vehicle}

A vehicle $i$'s finite-horizon strategy is the collection of its state and input trajectories,
\begin{equation}\label{eq:theta_i_def}
\theta_i \coloneqq \Big( \{ s_i(t), v_i(t), z_i(t) \}_{t\in\mcl{T}},\ \{ a_i(t), b_i(t) \}_{t \in \mcl{T}^{-}} \Big),
\end{equation}
and we write $\theta \coloneqq( \theta_1, \ldots, \theta_N)$ for the joint strategy profile. Given initial conditions $\{s_i(0), v_i(0), z_i(0) \}_{i \in \mcl{N}}$, each vehicle $i \in \mcl{N}$ seeks a feasible strategy that respects the shared safety rules while optimizing its own preferences ($\eg$, tracking a desired speed $v_i^{\des}$ and lane $\ell_i^{\des}$, as introduced earlier). We formalize through the following assumption.

\begin{assumption}[Separable objectives]\label{assum_separable_cost}
Vehicle objectives are separable across vehicles, $\ie$, $J_i(\theta_i) = J_i(\theta_i,\theta_{-i})$, and interaction between vehicles occurs only through the coupled feasibility constraints (Rules~1--2).
\end{assumption}

Assumption~\ref{assum_separable_cost} captures the modeling choice that vehicles do not directly penalize (or reward) other vehicles in their objectives; rather, strategic coupling arises solely because each vehicle's feasible set depends on the others through shared safety rules. Formally, vehicle $i$ solves:
\begin{equation} \label{eq_single_vehicle}
\begin{aligned}
&\minimize_{\theta_i}\quad && J_i(\theta_i)\\
&\subjectto \quad
&& s_i(0),\,v_i(0),\,z_i(0)\ \text{given},\\
&&& \text{\cref{eq_dynamics,eq_state_bounds,eq_actuation_bounds,eq_lane_evolution}},\\
&&& \eqref{eq_rule1},\ \eqref{eq_rule2}, \qquad & \forall j \in\mcl{N}\setminus\{i\}.
\end{aligned}
\end{equation}
The constraints in Equations \cref{eq_dynamics,eq_state_bounds,eq_actuation_bounds,eq_lane_evolution} enforce discrete-time longitudinal dynamics \eqref{eq_dynamics}, bounded roadway and speed domains \eqref{eq_state_bounds}, actuation limits \eqref{eq_actuation_bounds}, and lane-change feasibility \eqref{eq_lane_evolution}. Vehicle interactions enter only through the coupled collision-avoidance Rules~1 and 2: \eqref{eq_rule1} enforces longitudinal separation between vehicles in the same lane, while \eqref{eq_rule2} prevents simultaneous lane swaps when vehicles are side-by-side in adjacent lanes.

To connect each vehicle's optimization problem in \eqref{eq_single_vehicle} with the GCS construction in Section~\ref{sec_approach}, we impose the following mild modeling assumptions.

\begin{assumption}[Convexity of continuous costs] \label{assum_convex_cost}
For each vehicle $i\in\mcl{N}$ and for any fixed lane and blinker sequences $(z_i,b_i)$, the objective $J_i(\theta_i)$ is proper, closed, and convex in the continuous variables $\{s_i(t), v_i(t)\}_{t \in \mcl{T}}$ and $\{a_i(t)\}_{t \in \mcl{T}^{-}}$, and takes values in $\mbb{R}_{\ge 0}$.
\end{assumption}

Assumption~\ref{assum_convex_cost} ensures that once the discrete lane/blinker sequence is fixed, the remaining optimization over continuous variables is a convex program. This property is essential for modeling each maneuver option as a convex set (vertex) and each time-adjacent transition as a convex constraint (edge) in a GCS.

\paragraph{Example objective function.}
A concrete cost function consistent with the preferences and assumptions introduced earlier, and used later in Section~\ref{sec_approach}, is the following cost function:
\begin{equation}\label{eq_Ji_example}
\begin{aligned}
J_i(\theta_i)
\coloneqq &\sum_{t \in \mcl{T}^{-}} \Bigl[
w_{i,v}\bigl(v_i(t{+}1)-v_i^{\des}\bigr)^2
+ w_{i,\ell}\bigl(z_i(t{+}1)-\ell_i^{\des}\bigr)^2 \\
&\hspace{3.2em}
+\, w_{i,a}\,a_i(t)^2
+ w_{i,b}\,b_i(t)^2
\Bigr]
+ w_{i,v}\bigl(v_i(T{-}1)-v_i^{\des}\bigr)^2 ,
\end{aligned}
\end{equation}
which penalizes deviations from the desired speed and lane, as well as acceleration and lane changes. For any fixed lane and blinker sequences $(z_i,b_i)$, the cost in \eqref{eq_Ji_example} is convex in the continuous variables $(s_i,v_i,a_i)$.

Let $\Theta$ denote the joint feasible set of all strategy profiles $\theta$ satisfying the dynamics, bounds, lane evolution, and safety rules for every vehicle. Under Assumption~\ref{assum_separable_cost}, the induced game admits a generalized potential formulation (Definition~\ref{def_GPG}) with potential:
\begin{equation}\label{eq_potential}
\Phi(\theta)\coloneqq \sum_{i\in\mcl{N}} J_i(\theta_i).
\end{equation}

\begin{problem}\label{prob_multi_vehicle}
Given initial conditions $\{s_i(0),v_i(0),z_i(0)\}_{i\in\mcl{N}}$ and vehicle preferences ($\eg$, $v_i^{\des}$ and $\ell_i^{\des}$), find a feasible joint strategy profile $\theta^\star\in\Theta$ such that no vehicle can unilaterally decrease its cost while respecting the coupled feasibility constraints, $\ie$, $\theta^\star$ is a generalized Nash equilibrium of the induced game.
\end{problem}

\section{Iterative Best Response Graphs of Convex Sets} \label{sec_approach}

We address Problem \ref{prob_multi_vehicle} using IBR. At each iteration, vehicles update their strategies sequentially by solving a single-vehicle motion planning problem while treating the trajectories of all other vehicles as fixed.

For vehicle $i \in \mcl{N}$, we construct a directed graph of convex sets $\mcl{G}_{i} = (\mcl{V}_i, \mcl{E}_i)$ tailored to the current best-response subproblem. Vertices represent collision-free convex regions at each time step, and directed edges represent dynamically feasible transitions between time-adjacent vertices.

\begin{figure}[H]
\begin{minipage}[t]{0.70\linewidth}
\includegraphics[
width=\linewidth
]{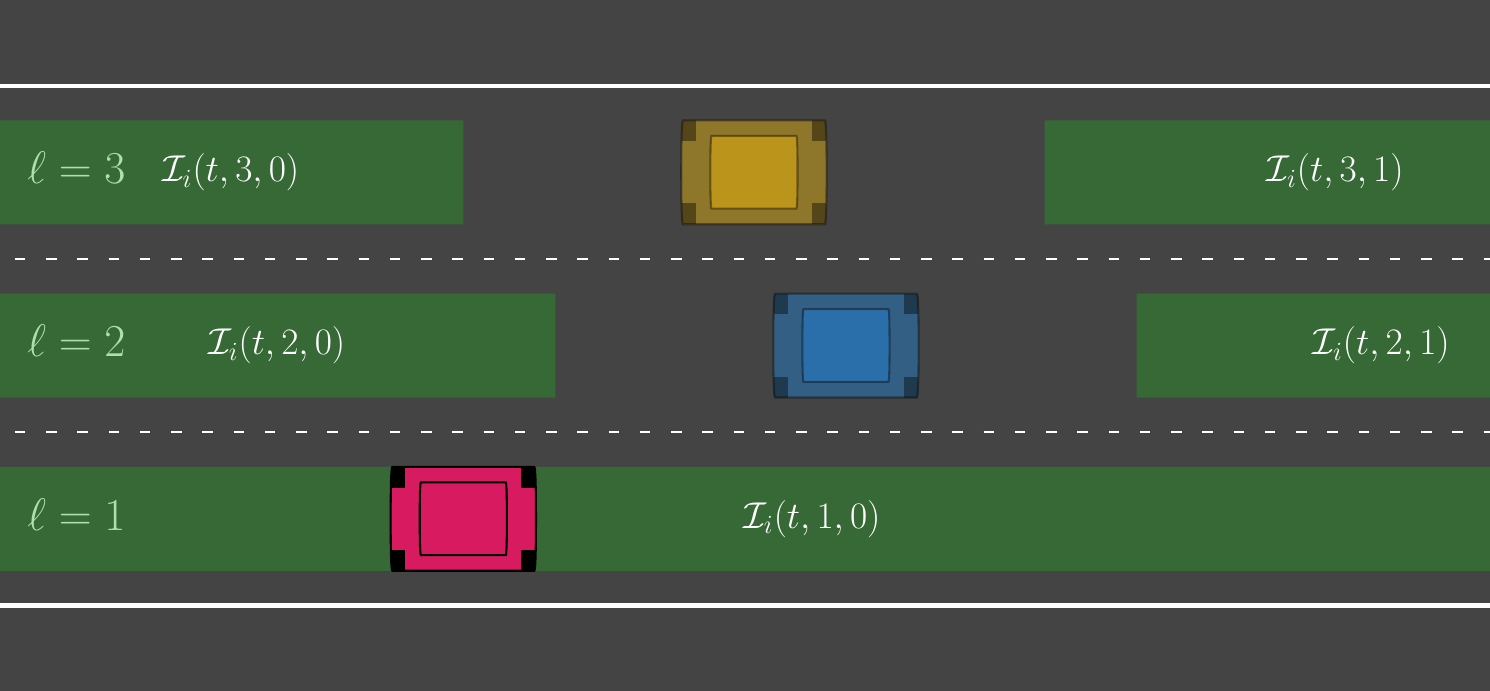}
\caption{A set of vehicles driving on a highway with the red vehicle's safe gaps in green.}
\label{fig_safe_gaps}
\end{minipage}
\hspace{0.03\linewidth}
\begin{minipage}[t]{0.20\linewidth}
\includegraphics[
width=\linewidth
]{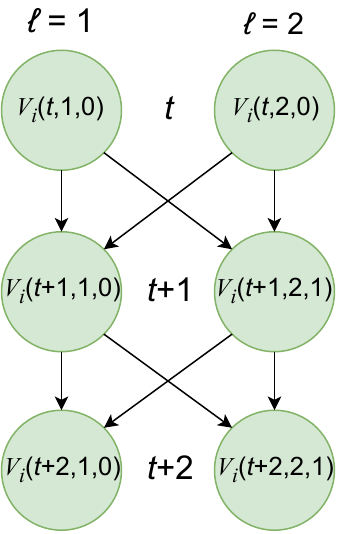}
\caption{Time expanded graph.}
\label{fig_safe_gaps_graph}
\end{minipage}
\end{figure}

\subsection{Vertex Construction} \label{sec_vertex}

Fix an IBR iteration and treat the other vehicles' trajectories as known. For each lane $\ell \in \mcl{L}$ and time step $t \in \mcl{T}$, define the unsafe longitudinal interval around each vehicle $j \neq i$ that is in lane $\ell$ at time step $t \in \mcl{T}$:
\begin{align}
\mcl{K}_{i}^{j}(\ell,t) \coloneqq \bigl(\, s_j(t) - d_i^{\saf},\ s_j(t) + d_i^{\saf} \,\bigr).
\end{align}
Let the union of unsafe intervals in lane $\ell$ at time step $t \in \mcl{T}$ be
\begin{align}
\mcl{F}_{i}(\ell,t) \coloneqq \bigcup_{\substack{j\in\mcl{N}\setminus\{i\}\\ z_j(t)=\ell}} \mcl{K}_{i}^{j}(\ell,t).
\end{align}
Over a bounded road segment $[\,\ubar{s},\ \obar{s}\,]$, the collision-free set is
\begin{align}
\mcl{S}_{i}(\ell,t) \coloneqq [\,\ubar{s},\ \obar{s}\,] \setminus \mcl{F}_{i}(\ell,t).
\end{align}
Since $\mcl{F}_{i}(\ell,t)$ is a finite union of open intervals, $\mcl{S}_{i}(\ell, t)$ can be expressed as a finite union of disjoint closed intervals (safe gaps)
\begin{align}
\mcl{S}_{i}(\ell,t) = \bigcup_{g \in \mcl{H}_{i}(t, \ell)} \mcl{I}_{i}(t,\ell,g),
\end{align}
where $\mcl{H}_{i}(t,\ell)$ is the finite index set of connected components of $\mcl{S}_{i}(\ell, t)$ and $\{ \mcl{I}_{i}(t,\ell,g) \}_{g \in \mcl{H}_{i} (t, \ell) }$ denotes the corresponding family of pairwise-disjoint closed intervals. Equivalently, each $g \in \mcl{H}_{i} (t, \ell)$ labels one collision-free longitudinal ``gap'' in lane $\ell \in \mcl{L}$ at time step $t \in \mcl{T}$ between consecutive unsafe regions induced by other vehicles.

Each safe interval $\mcl{I}_{i}(t,\ell,g)$, as illustrated in Fig.~\ref{fig_safe_gaps}, induces a vertex $V_i(t,\ell,g)\in\mcl{V}_i$  whose continuous state variable is
\begin{align}
x_{V_i(t,\ell,g)} \coloneqq
\begin{bmatrix}
s_{V_i(t,\ell,g)}\\
v_{V_i(t,\ell,g)}
\end{bmatrix} \in\mbb{R}^2,
\end{align}
and whose convex constraints are
\begin{equation}\label{eq_vertex_constraints}
s_{V_i(t,\ell,g)} \in \mcl{I}_{i}(t,\ell,g),
\qquad
v_{V_i(t,\ell,g)} \in [\,\ubar{v}_i,\ \obar{v}_i\,].
\end{equation}
The discrete lane choice is encoded by the vertex lane index $\ell \in \mcl{L}$.

\paragraph{Initial condition.}
To impose initial state conditions of vehicle $i$, choose any $g_0 \in \mcl{H}_{i}\big(0,z_i(0)\big)$ such that $s_i(0)\in \mcl{I}_{i}(0,z_i(0),g_0)$ (ties may be broken arbitrarily), and enforce
\begin{equation}\label{eq_initial_condition_vertex}
x_{V_i(0,z_i(0),g_0)}=
\begin{bmatrix}
s_i(0)\\
v_i(0)
\end{bmatrix}.
\end{equation}
This fixes a valid source vertex consistent with the initial state.

\subsection{Edge Construction}\label{sec_edge}

Directed edges connect time-adjacent vertices:
\begin{align}
e = \bigl(V_i(t,\ell,g),\ V_i(t+1,\ell',g')\bigr),
\qquad t\in\mcl{T}^{-},
\end{align}
as shown in Fig.~\ref{fig_safe_gaps_graph}. Only edges between adjacent lanes are permitted, $\ie$, $|\ell'-\ell|\le 1$. Each edge enforces dynamic feasibility of the form \eqref{eq_dynamics} with a decision variable $a_i(t) \in [\,\ubar{a}_i,\ \obar{a}_i\,]$ satisfying \eqref{eq_actuation_bounds} and consistent state variables at times $t$ and $t+1$. The lane-change action on the edge is $b_i(t)=\ell'-\ell\in\{-1,0,1\}$, consistent with \eqref{eq_lane_evolution}. Lateral safety (Rule~2) is enforced by excluding lane-change edges that would violate \eqref{eq_rule2} given the fixed trajectories and lane-change actions of other vehicles. Each edge $e=\bigl(V_i(t,\ell,g),\,V_i(t+1,\ell',g')\bigr)$ is assigned a convex quadratic edge cost $c_e(\cdot)$ that depends on the decision variables on that edge, namely the successor velocity $v_{V_i(t+1,\ell',g')}$ and the control $a_i(t)$ (with $b_i(t)=\ell'-\ell$):
\vspace{-10pt}
\begin{multline}\label{eq_edge_cost}
c_e \bigl(x_{V_i(t,\ell,g)},x_{V_i(t+1,\ell',g')},a_i(t)\bigr) =
w_{i,v}\!\left(v_{V_i(t+1,\ell',g')}-v_i^{\des}\right)^{2}
+ w_{i,\ell}\!\left(\ell'-\ell_i^{\des}\right)^{2} \\
+
w_{i,a}\,a_i(t)^{2}
+ w_{i,b}\,b_i(t)^{2},
\end{multline}
where $w_{i,v}, w_{i,\ell}, w_{i,a}, w_{i,b}\in\mbb{R}_{>0}$ are weights. A terminal penalty may be included as a vertex cost at $t=T-1$, $\eg$,
\begin{equation}
c_V \left( x_{V_i(T-1,\ell,g)} \right) = w_{i,v} \left( v_{V_i(T-1,\ell,g)}-v_i^{\des} \right)^2.
\end{equation}

Equipped with this construction, vehicle $i \in \mcl{N}$'s single-vehicle planning problem \eqref{eq_single_vehicle} can be cast as a GCS shortest path problem on $\mcl{G}_i$.

\begin{proposition}[Single-vehicle optimality under tightness]\label{prop_single_vehicle_tight}
Fix the trajectories of all vehicles $j\neq i$ and construct $\mcl{G}_i$ accordingly. If the convex relaxation of the induced GCS shortest path problem is tight, $\ie$, it admits an optimal solution with $z_e \in \{0,1\}$, then solving the relaxed problem yields a globally optimal solution of vehicle $i$'s mixed-integer best-response subproblem on $\mcl{G}_i$. Consequently, under tightness, the resulting update is an \emph{exact} best response, so each IBR step is a potential-descent step in the sense of Proposition~\ref{prop_potential_descent}.
\end{proposition}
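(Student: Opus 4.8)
The plan is to prove the result in three stages: first, establish that vehicle $i$'s best-response subproblem \eqref{eq_single_vehicle} is equivalent—via a cost-preserving correspondence (a bijection between feasible points) between feasible trajectories and paths in $\mcl{G}_i$—to the mixed-integer GCS shortest path problem \eqref{eq_gcs_shortest_path} on $\mcl{G}_i$; second, invoke a standard relaxation argument to show that, under tightness, the relaxed optimum is attained at an integral point that is globally optimal for the mixed-integer problem; and third, transport this global optimality back through the correspondence and apply Proposition~\ref{prop_potential_descent}.

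For the first stage, I would fix $\theta_{-i}$ and verify that the construction of Sections~\ref{sec_vertex}--\ref{sec_edge} encodes exactly the constraints of \eqref{eq_single_vehicle}. I would first show that any feasible $\theta_i\in\Theta_i(\theta_{-i})$ induces a path from the initial-condition vertex \eqref{eq_initial_condition_vertex} to a terminal-time vertex: at each $t$, longitudinal safety (Rule~1) is equivalent to $s_i(t)\notin\mcl{F}_i(z_i(t),t)$, so $s_i(t)$ lies in a unique safe gap $\mcl{I}_i(t,z_i(t),g)$, selecting a vertex; dynamic feasibility \eqref{eq_dynamics} with $a_i(t)\in[\,\ubar{a}_i,\ \obar{a}_i\,]$ from \eqref{eq_actuation_bounds}, the adjacency restriction $|\ell'-\ell|\le 1$ from \eqref{eq_lane_evolution}, and the Rule~2 edge-exclusion then certify that consecutive vertices are joined by an admissible edge. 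Conversely, any such path yields a feasible $\theta_i$, since the vertex constraints \eqref{eq_vertex_constraints} enforce the state bounds and Rule~1, while the edge constraints enforce the dynamics, actuation bounds, lane evolution, and Rule~2. Finally, I would check that the edge-and-vertex costs of \eqref{eq_edge_cost} telescope to $J_i(\theta_i)$ in \eqref{eq_Ji_example}, so the path cost equals the objective; Assumption~\ref{assum_convex_cost} supplies the convexity and nonnegativity required by the GCS template.

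For the second stage, let $p^\star_{\mrm{MI}}$ and $p^\star_{\mrm{rel}}$ denote the optimal values of \eqref{eq_gcs_shortest_path} and of its convex relaxation. Because the relaxation enlarges the feasible set—replacing $z_e\in\{0,1\}$ by $0\le z_e\le 1$ and rewriting the implications \eqref{eq_gcs_shortest_path_edgefeas} through a perspective reformulation that coincides with the original constraints whenever $z_e$ is integral—we have $p^\star_{\mrm{rel}}\le p^\star_{\mrm{MI}}$. By the tightness hypothesis, the relaxation attains $p^\star_{\mrm{rel}}$ at a point with $z_e\in\{0,1\}$; at such an integral point the perspective objective and constraints reduce to their mixed-integer counterparts, so this point is feasible for \eqref{eq_gcs_shortest_path} with objective value $p^\star_{\mrm{rel}}$, giving $p^\star_{\mrm{MI}}\le p^\star_{\mrm{rel}}$. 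Hence $p^\star_{\mrm{MI}}=p^\star_{\mrm{rel}}$, and the recovered integral solution is a global minimizer of the mixed-integer problem.

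Combining the two stages, the recovered path is a global optimum of the GCS SPP, and under the correspondence its induced trajectory $\theta_i$ globally minimizes $J_i$ over $\Theta_i(\theta_{-i})$—that is, $\theta_i\in\BR_i(\theta_{-i})$, an exact best response. Since the induced game is a generalized potential game with potential $\Phi$ in \eqref{eq_potential}, Proposition~\ref{prop_potential_descent} applies and yields $\Phi(\theta^{k,i})\le\Phi(\theta^{k,i-1})$. I expect the main obstacle to be the careful verification in the first stage that the safe-gap and edge construction captures Rules~1 and~2 with no feasible trajectory lost and no infeasible path admitted—in particular, confirming that the open unsafe intervals and closed safe gaps align exactly with the non-strict inequality in Rule~1, and that the Rule~2 edge exclusion is neither too permissive nor too restrictive. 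By contrast, the relaxation argument of the second stage is standard once the equivalence is in place.
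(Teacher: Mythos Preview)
The paper states Proposition~\ref{prop_single_vehicle_tight} without proof; it is presented as an immediate consequence of the vertex/edge construction in Sections~\ref{sec_vertex}--\ref{sec_edge} together with the standard fact that a tight relaxation recovers a mixed-integer optimum. Your three-stage outline is correct and essentially makes explicit what the paper leaves implicit: the cost-preserving bijection between feasible $\theta_i\in\Theta_i(\theta_{-i})$ and source-to-terminal paths in $\mcl{G}_i$, the elementary sandwich $p^\star_{\mrm{rel}}\le p^\star_{\mrm{MI}}\le p^\star_{\mrm{rel}}$ under tightness, and the invocation of Proposition~\ref{prop_potential_descent}. The subtleties you flag (alignment of open unsafe intervals with the non-strict inequality in \eqref{eq_rule1}, and faithfulness of the Rule~2 edge exclusion) are indeed the only places where care is needed, and the construction as written handles them; there is nothing further to compare against.
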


\begin{remark}
If the relaxation is not tight, the computed update can be interpreted as an approximate best response. Section \ref{sec_error} quantifies the effect of such inexactness.
\end{remark}

\subsection{Iterative Best-Response on Graphs of Convex Sets Algorithm} \label{sec_IBR}

We solve the multi-vehicle game in Problem \ref{prob_multi_vehicle} using an IBR scheme, summarized in Algorithm \ref{alg_ibr}.
\begin{algorithm}
\caption{Iterative Best-Response on Graphs of Convex Sets (\ours)} \label{alg_ibr}
\SetKwInput{KwInput}{Input}
\SetKwInput{KwOutput}{Output}

\KwInput{Initial feasible strategies $\theta^{0}=(\theta_1^{0},\ldots,\theta_N^{0})\in\Theta$, maximum sweeps $K_{\max}$, tolerance $\epsilon$}
\KwOutput{Final strategy profile $\theta^{\mathrm{out}}=(\theta_1^{\mathrm{out}},\ldots,\theta_N^{\mathrm{out}})$}

Set $k\gets 0$ and compute $\Phi(\theta^{0})$\;
\For{$k = 0$ \KwTo $K_{\max}-1$}{
Set $\theta^{k,0}\gets \theta^{k}$ \tcp*{start of sweep}
\For{$i = 1$ \KwTo $N$}{
Construct vehicle $i$'s GCS $\mathcal{G}_i^k$ using the other vehicles' strategies $\theta_{-i}^{k,i-1}$\; \label{alg_construct}
Compute a (possibly inexact) best response by solving the single-vehicle GCS problem on $\mathcal{G}_i^k$ to obtain $\theta_i^{k+1}$\; \label{alg_solve}
Update the intermediate profile $\theta^{k,i}\gets(\theta_1^{k+1},\ldots,\theta_i^{k+1},\theta_{i+1}^{k},\ldots,\theta_N^{k})$\; \label{alg_update}
}
Set $\theta^{k+1}\gets \theta^{k,N}$ \tcp*{end of sweep} \label{alg_end_of_sweep}
Compute potential change $\Delta \gets \bigl|\Phi(\theta^{k+1})-\Phi(\theta^{k})\bigr|$\; \label{alg_pot}
\If{$\Delta < \epsilon$}{
\textbf{break}\; \label{alg_term}
}
}
\Return{$\theta^{\mathrm{out}} \gets \theta^{k+1}$}\;
\end{algorithm}

The algorithm proceeds in \emph{sweeps} indexed by $k$: during sweep $k$, vehicles update sequentially, and each vehicle $i$ computes a best response to the most recent strategies of the other vehicles, denoted $\theta_{-i}^{k,i-1}$. Concretely, at its update, vehicle $i$ constructs a GCS instance $\mathcal{G}_i^k$ that encodes collision-free regions and feasible transitions given $\theta_{-i}^{k,i-1}$, solves the resulting single-vehicle GCS SPP, and records the resulting strategy $\theta_i^{k+1}$ (lines \ref{alg_construct}--\ref{alg_update} of Algorithm \ref{alg_ibr}). After all vehicles have updated, the sweep terminates, and the algorithm produces the next joint strategy profile $\theta^{k+1}$ (line~\ref{alg_end_of_sweep}).

Under exact best-response updates in a generalized potential game (see Section~\ref{sec_potential_games}), Proposition~\ref{prop_potential_descent} implies that the generalized potential is monotonically nonincreasing along the intermediate profiles $\theta^{k,i}$, and hence along the sweep iterates $\theta^{k}$. In our implementation, each best response is computed by solving a convex relaxation of the underlying GCS mixed-integer problem. Consequently, the update may be inexact and should be interpreted as an approximate best response. The resulting equilibrium error and its relationship to the per-update suboptimality are quantified in Section \ref{sec_error}. We terminate IBR when the decrease in the potential across a sweep is below a specified tolerance (lines \ref{alg_end_of_sweep}--\ref{alg_term}).

\subsection{Error Quantification} \label{sec_error}

This section quantifies the extent to which inexact best-response updates degrade the limiting quality of the strategy profiles produced by the IBR procedure. Since each vehicle solves a relaxed GCS problem in practice, the resulting update may not be an exact best response to the other vehicles' fixed strategies. We therefore measure the suboptimality of each update relative to the true best-response problem and translate this into an approximate equilibrium guarantee.

We recall the definitions of the joint feasible set $\Theta$, the individual feasible sets $\Theta_i(\theta_{-i})$, and the intermediate profiles $\theta^{k,i}$ introduced in Equations \eqref{generalized_strategy_set} and \eqref{eq_theta_intermediate}. We note that for the remainder of this section, we revert to the cost function notation used prior to Assumption \ref{assum_separable_cost}, $\ie$, $J_i(\theta_i,\theta_{-i}) = J_i(\theta_i)$.

\paragraph{Approximate best responses.}
At each update, vehicle $i \in \mcl{N}$ aims to minimize its cost over the feasible set induced by the current strategies of the other vehicles. We measure the quality of the computed update by comparing its cost to the optimal value of this best-response problem.

\begin{definition}[Approximate best response]\label{def_ABR}
At sweep $k$ and vehicle $i \in \mcl{N}$, we say the computed update $\theta_i^{k+1}$ has best-response error at most $\epsilon_i^k\ge 0$ if
\begin{equation}\label{ineq_ABR}
J_i(\theta_i^{k+1}, \theta_{-i}^{k,i-1}) \le \inf_{\theta_i \in \Theta_i(\theta_{-i}^{k,i-1})} J_i(\theta_i, \theta_{-i}^{k,i-1}) +\epsilon_i^k.
\end{equation}
\end{definition}

\paragraph{Approximate equilibrium.}
A natural benchmark for IBR is a generalized Nash equilibrium (Definition~\ref{def_GNE}), in which no vehicle can reduce its cost through a feasible unilateral deviation. Since we allow inexact best responses, we use the standard relaxation of this notion, $\cf$ \cite[Definition 3.4.7]{shoham2008multiagent}.

\begin{definition}[$\epsilon$-GNE] \label{def_epsGNE}
A strategy profile $\theta\in\Theta$ is an $\epsilon$-GNE if
\begin{align}
J_i(\theta_i,\theta_{-i}) \le \inf_{\hat{\theta}_i\in\Theta_i(\theta_{-i})} J_i(\hat{\theta}_i,\theta_{-i}) + \epsilon, \qquad \forall i \in \mcl{N}.
\end{align}
\end{definition}

To connect update-level errors to equilibrium quality, we introduce each vehicle's \emph{regret}, $\ie$, the gap between its current cost and its best feasible unilateral deviation.

\begin{align}
r_i(\theta) \coloneqq J_i(\theta_i, \theta_{-i}) - \inf_{\hat{\theta}_i \in \Theta_i(\theta_{-i})} J_i(\hat{\theta}_i, \theta_{-i}) \ge 0.
\end{align}

Our analysis is based on two standard technical assumptions. The first prevents the potential from decreasing without bound, while the second postulates that the inexactness of each best-response computation is uniformly bounded.

\begin{assumption}\label{assum_bounded}
The potential function $\Phi$ is bounded below on $\Theta$, and $\Theta$ is nonempty and compact.
\end{assumption}

\begin{assumption}\label{assum_bderror}
The IBR iterates satisfy \eqref{ineq_ABR} with errors uniformly bounded by $\epsilon_i^k \le \bar{\epsilon} < \infty$ for all $i \in \mcl{N}$ and all $k\ge 0$.
\end{assumption}

Assumption \ref{assum_bounded} is standard in our setting, as physical constraints and finite discrete decisions ensure the feasible strategy space is a nonempty compact set over which the nonnegative potential function is bounded below. However, Assumption \ref{assum_bderror} is stronger and generally difficult to verify, requiring a uniform bound of the suboptimality of the GCS convex relaxation that is often observed in motion-planning problems (see Remark \ref{rem_tightness_empirical}), but is difficult to guarantee theoretically ($\cf$, \cite[Proposition~8.1]{GCS_thesis}).

The following result states that if each IBR update is an $\epsilon$-approximate best response with a uniform error bound, then the iterates asymptotically lie in an approximate equilibrium set: no vehicle can improve by more than $\bar\epsilon$ in the limit. 

\begin{theorem}[$\bar\epsilon$-equilibrium under \ours]\label{thm_approximateGNE}
Suppose $G$ is a generalized potential game with potential $\Phi$ (Definition \ref{def_GPG}). Under Assumptions \ref{assum_bounded}--\ref{assum_bderror},
\begin{align*}
\limsup_{k\to\infty}\ \max_{i \in \mcl{N}} r_i(\theta^k)\ \le\ \bar{\epsilon}.
\end{align*}
In particular, the iterates asymptotically lie in the $\bar\epsilon$-GNE set (Definition \ref{def_epsGNE}).
\end{theorem}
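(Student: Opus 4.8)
The plan is to turn the per-update approximate-best-response bound into a per-step inequality on the potential, telescope it over a full sweep and then over sweeps, and finally invoke boundedness of $\Phi$ to control the accumulated regret. The starting point is a single step in which vehicle $i$ moves from $\theta_i^{k}$ to $\theta_i^{k+1}$ with the others held at $\theta_{-i}^{k,i-1}$, carrying $\theta^{k,i-1}$ to $\theta^{k,i}$. Since $G$ is a generalized potential game, the change in $\Phi$ equals the change in $J_i$ along this unilateral move (Definition~\ref{def_GPG}), provided the intermediate profiles remain jointly feasible so that $\theta_i^{k},\theta_i^{k+1}\in\Theta_i(\theta_{-i}^{k,i-1})$ — a point I would verify first. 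Writing $J_i^{\star}\coloneqq\inf_{\theta_i\in\Theta_i(\theta_{-i}^{k,i-1})}J_i(\theta_i,\theta_{-i}^{k,i-1})$, the approximate-best-response bound \eqref{ineq_ABR} gives $J_i(\theta_i^{k+1},\theta_{-i}^{k,i-1})\le J_i^{\star}+\epsilon_i^{k}$, while the $i$-th component of $\theta^{k,i-1}$ is exactly $\theta_i^{k}$, so by the definition of regret $J_i(\theta_i^{k},\theta_{-i}^{k,i-1})=J_i^{\star}+r_i(\theta^{k,i-1})$. Subtracting yields the key descent-with-error inequality
\[
\Phi(\theta^{k,i})-\Phi(\theta^{k,i-1})=J_i(\theta_i^{k+1},\theta_{-i}^{k,i-1})-J_i(\theta_i^{k},\theta_{-i}^{k,i-1})\le \epsilon_i^{k}-r_i(\theta^{k,i-1})\le \bar\epsilon-r_i(\theta^{k,i-1}).
\]

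Next I would telescope. Summing over $i=1,\dots,N$ within sweep $k$ and using $\theta^{k,0}=\theta^{k}$, $\theta^{k,N}=\theta^{k+1}$ gives $\Phi(\theta^{k+1})-\Phi(\theta^{k})\le N\bar\epsilon-\sum_{i}r_i(\theta^{k,i-1})$, and summing this over $k=0,\dots,K-1$ telescopes the potential to $\sum_{k<K}\sum_{i}r_i(\theta^{k,i-1})\le KN\bar\epsilon+\Phi(\theta^{0})-\Phi(\theta^{K})$. By Assumption~\ref{assum_bounded}, $\Phi\ge\Phi_{\min}$, so $\Phi(\theta^{0})-\Phi(\theta^{K})\le \Phi(\theta^{0})-\Phi_{\min}\eqqcolon D<\infty$ uniformly in $K$. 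Dividing by $K$ then shows the time-averaged total regret obeys $\limsup_{K}\tfrac1K\sum_{k<K}\sum_{i}r_i(\theta^{k,i-1})\le N\bar\epsilon$; equivalently, averaged over the intermediate profiles, each vehicle's regret is at most $\bar\epsilon$. This averaged statement is the rigorous core and uses only Assumptions~\ref{assum_bounded}--\ref{assum_bderror}.

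The last step is to upgrade this to the stated pointwise conclusion $\limsup_{k}\max_{i}r_i(\theta^{k})\le\bar\epsilon$, and I expect this to be the main obstacle. Two facts should help: immediately after vehicle $i$ updates, its regret is already controlled, since $\theta^{k,i}$ shares its rivals' strategies with the moment $i$ best-responded, giving $r_i(\theta^{k,i})=J_i(\theta_i^{k+1},\theta_{-i}^{k,i-1})-J_i^{\star}\le\epsilon_i^{k}\le\bar\epsilon$; and by the compactness in Assumption~\ref{assum_bounded} every subsequence of iterates has a convergent sub-subsequence. The difficulty is that after $i$ updates, vehicles $i{+}1,\dots,N$ perturb $\theta_{-i}$ within the same sweep, so $r_i$ can grow again, and for $\bar\epsilon>0$ the per-sweep potential change need not vanish — hence there is no monotone convergence of $\Phi(\theta^{k})$ to exploit. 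I would close the gap by extracting a subsequence $\theta^{k_m}\to\theta^{\star}$ realizing the $\limsup$, arguing that the within-sweep displacements $\theta^{k_m,i}-\theta^{k_m,i-1}$ can be made to vanish along a suitable subsequence so that all intermediate profiles share the limit $\theta^{\star}$, and then passing $r_i(\theta^{k_m,i})\le\bar\epsilon$ to the limit via (lower semi)continuity of $\theta\mapsto r_i(\theta)$ — which requires continuity of the best-response value $\inf_{\theta_i\in\Theta_i(\theta_{-i})}J_i$ in $\theta_{-i}$, e.g.\ through a Berge-type maximum theorem under continuity of the correspondence $\Theta_i(\cdot)$. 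Establishing this regularity is delicate because the safety rules make $\Theta$ nonconvex and the correspondence potentially irregular; in the exact case $\bar\epsilon=0$ it simplifies, since $\Phi(\theta^{k})$ is then monotone and the per-step regrets are summable, forcing $r_i(\theta^{k,i-1})\to0$ directly.
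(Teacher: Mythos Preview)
Your per-step descent-with-error inequality
\[
\Phi(\theta^{k,i-1})-\Phi(\theta^{k,i}) \;\ge\; r_i(\theta^{k,i-1}) - \bar\epsilon
\]
is exactly the key step the paper derives. From there, however, the paper does not telescope and average and then try to upgrade via compactness and a Berge-type maximum theorem; it argues directly by contradiction. Assuming the conclusion fails, there exist $\eta>0$ and infinitely many pairs $(k_m,i_m)$ with $r_{i_m}(\theta^{k_m,i_m-1})\ge\bar\epsilon+\eta$; the per-step inequality then gives a drop of at least $\eta$ in $\Phi$ at each such update, and the paper concludes $\Phi(\theta^{k_m})\to-\infty$, contradicting Assumption~\ref{assum_bounded}. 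This route entirely sidesteps the continuity of $\theta_{-i}\mapsto\inf_{\theta_i\in\Theta_i(\theta_{-i})}J_i$ and the regularity of the correspondence $\Theta_i(\cdot)$ that you worry about; none of that machinery is invoked.

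That said, the obstacle you flag is real, and the paper's contradiction argument glosses over precisely the same point rather than resolving it. With $\bar\epsilon>0$, each intermediate update can \emph{increase} $\Phi$ by as much as $\bar\epsilon$, so isolated drops of $\eta$ at the steps $(k_m,i_m)$ do not by themselves force $\Phi\to-\infty$: the paper's assertion that ``$\Phi(\theta^{k_m+1})\le\Phi(\theta^{k_m})-\eta$'' tacitly assumes the other $N-1$ updates in sweep $k_m$ are nonincreasing on $\Phi$, which holds only in the exact case $\bar\epsilon=0$, and nothing prevents $\Phi$ from climbing back between consecutive $k_m$. Your telescoped, time-averaged bound is therefore the part that follows rigorously from Assumptions~\ref{assum_bounded}--\ref{assum_bderror} alone; upgrading it to the pointwise $\limsup$ claim---whether via your subsequence/semicontinuity route or via the paper's contradiction---requires an additional ingredient that neither argument supplies.
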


\begin{proof}
Fix a sweep index $k$ and consider the update of vehicle $i\in\mcl{N}$ from the intermediate profile $\theta^{k,i-1}$ to $\theta^{k,i}$. Since $\Phi$ is a generalized potential (Definition~\ref{def_GPG}), changes in vehicle $i$'s cost match changes in $\Phi$ under feasible unilateral deviations. Combining this property with the approximate best-response condition \eqref{ineq_ABR} yields
\begin{align}
\Phi(\theta^{k,i-1})-\Phi(\theta^{k,i})
&= J_i(\theta_i^{k,i-1},\theta_{-i}^{k,i-1}) - J_i(\theta_i^{k,i},\theta_{-i}^{k,i-1}) \nonumber\\
&\ge r_i(\theta^{k,i-1}) - \epsilon_i^k
\ \ge\ r_i(\theta^{k,i-1}) - \bar{\epsilon}.
\label{ineq_improv}
\end{align}

We proceed by contradiction. Suppose there exists $\eta>0$ and an infinite subsequence of sweeps $\{k_m\}_{m \in \mbb{N}}$ such that, for each $m \in \mbb{N}$, there exists an index $i_m \in \mcl{N}$ with
\begin{equation}\label{ineq_contradiction}
r_{i_m}(\theta^{k_m,i_m-1}) \ge \bar{\epsilon} + \eta.
\end{equation}
Applying \eqref{ineq_improv} at the update of vehicle $i_m$ during sweep $k_m$ gives
\begin{align}
\Phi(\theta^{k_m,i_m-1})-\Phi(\theta^{k_m,i_m}) \ge r_{i_m}(\theta^{k_m,i_m-1})-\bar{\epsilon} \ge \eta.
\end{align}
In particular, since $\theta^{k_m,i_m}$ occurs within sweep $k_m$, we have $\Phi(\theta^{k_m+1})\le \Phi(\theta^{k_m})-\eta$. Iterating along the subsequence implies $\Phi(\theta^{k_m}) \to -\infty$ as $m\to\infty$, contradicting the boundedness of $\Phi$ below on $\Theta$ (Assumption~\ref{assum_bounded}). Therefore,
\begin{align}
\limsup_{k\to\infty}\max_{i\in\mcl{N}} r_i(\theta^k)\le \bar{\epsilon},
\end{align}
which proves the claim. \hfill $\square$
\end{proof}

\begin{remark}    
As noted earlier, GCS relaxations are often tight in practice ($\ie$, $\epsilon_i^k=0$), but they can be arbitrarily loose in special cases \cite[Proposition 8.1]{GCS_thesis}, in which case Assumption \ref{assum_bderror} may fail. For instance, this may happen when the costs and edge weights are symmetric \cite{GCS_thesis,GCS}. In such cases, the result above should be interpreted as a conditional guarantee: whenever the per-update best-response error remains uniformly bounded, the IBR iterates converge to a correspondingly bounded approximate equilibrium set.
\end{remark}

\section{Simulation Results} \label{sec_simulation_results}
The performance of the proposed method is illustrated in a representative driving scenario. The parameters for the driving problem are $N = 6$, $|\mcl{L}| = 4$, $T = 30$, and discretization step of $\Delta t = \SI{0.3}{\second}$. The scenario is set up with two vehicles merging from a terminating lane into a three-lane highway, where four other vehicles operate. The algorithm converges in 2 iterations, with the resulting equilibrium visualized in Fig. \ref{fig_scenario_snapshots}. All simulations were performed using the \texttt{GCSOPT} library~\cite{gcsopt} on a laptop with a AMD Ryzen\texttrademark~7~5800U (8-core, 16-thread) CPU and solved using the MOSEK~11.0.29 solver~\cite{mosek}, leading to a total wall-clock time of $\SI{297.24}{\second}$ for this scenario. The resulting strategies $\theta^{\star}$ drive each vehicle $i \in \mcl{N}$ safely over the chosen time horizon $\mcl{T}$ to a target lane $\ell_i^\text{des}$, while tracking a desired velocity $v_i^\text{des}$. In doing so, the vehicles perform multi-layered overtaking maneuvers while accelerating and ensuring both lateral and longitudinal safety.

\begin{figure}[H]
\includegraphics[width=\textwidth]{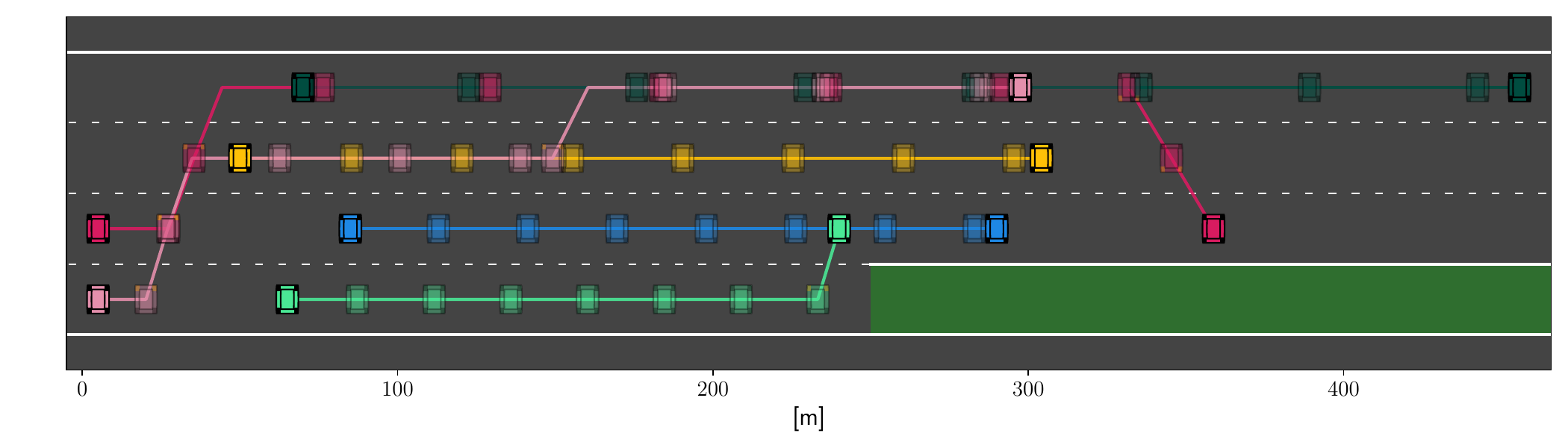}
\caption{Trajectories of six vehicles on a highway with one merging lane. The initial and final positions of each vehicle are shown without transparency, while positions at every third time step are shown with transparency.} \label{fig_scenario_snapshots}
\end{figure}
Additionally, multiple randomized simulations with $N=4$, $|\mcl{L}| = 3$, $T=30$, and $\Delta t = \SI{0.3}{\second}$ were conducted, with parameters sampled from the following uniform distributions: 
\begin{align*}
&v_i^\text{des} \sim \mcl{U}\ (\tfrac{80}{3.6}, \tfrac{160}{3.6}),  
&&\ell_i^\text{des} \sim \mcl{U}(\{1,\dots,|\mcl{L}|\}),
& w_{i,v} \sim \mcl{U}(0.1,1.0), \\
&w_{i,\ell} \sim \mcl{U}(5,25), 
&&w_{i,b} \sim \mcl{U}(5,10),
&w_{i,a} \sim \mcl{U}(0.1,0.5).
\end{align*}
Initial states and lane configurations were randomly sampled according to: 
\begin{align*}
&s_i^{0} \sim \mcl{U}\ ({0.0}, {200}),
&&v_i^{0} \sim \mcl{U}\ (\tfrac{60}{3.6}, \tfrac{130}{3.6}),
&\ell_i^{0} \sim \mcl{U}(\{1,2,3\}),
\end{align*}
until a collision-free arrangement was obtained. Figure \ref{fig_Pot_fctn} illustrates the evolution of the potential function $\Phi(\theta^k)$ over the iterations $k$. As predicted, the potential function values are decreasing, with a strictly positive decrease. Consequently, each iteration of the algorithm moves the system closer to a $\epsilon$-GNE, providing empirical validation of the proposed approach, \ours.

\begin{figure}[H]
\includegraphics[width=\textwidth]{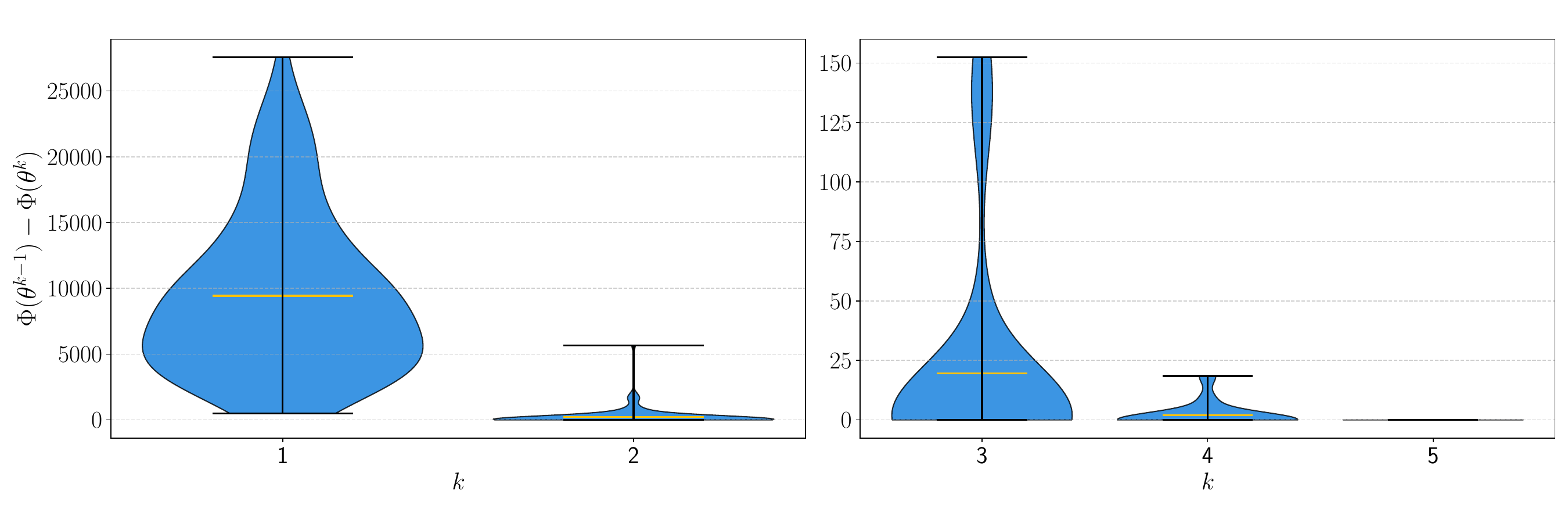}
\caption{Distributions of potential function evolution over iterations for 100 randomized simulation setups with the mean in orange. As shown, the potential function is always smaller at iteration $k$ than at $k-1$.} \label{fig_Pot_fctn}
\end{figure}

\begin{remark} \label{rem_tightness_empirical}
In all our simulations, the convex relaxation was observed to be very tight,  enabling us to solve complex driving scenarios to globally optimal solutions. While our work provides no theoretical tightness guarantee, these empirical observations suggest that additional structure may be present and motivate future work aimed at establishing formal tightness guarantees. \end{remark}

\section{Conclusion}\label{sec_conclusion}

This paper presented \ours, a graphs of convex sets (GCS) formulation for multi-vehicle highway autonomous driving with strategic interaction. For each vehicle, collision-free maneuver options and dynamically feasible transitions are encoded as a shortest-path problem (SPP) on a vehicle-specific GCS. This representation separates continuous trajectory optimization (handled within convex vertex/edge constraints and costs) from discrete maneuver selection (encoded in the graph topology) and admits a convex relaxation that is often tight. Since each vehicle's feasible maneuver graph depends on the current strategies of the other vehicles, a single centralized GCS is not naturally available; instead, \ours\ employs an iterative best-response (IBR) procedure in which vehicles repeatedly solve their individual GCS subproblems while holding the strategies of the other vehicles fixed.

From a game-theoretic perspective, the coupled feasibility constraints induce a generalized Nash game with a generalized potential structure that can be exploited to interpret IBR as an (approximate) descent method on a global potential function. Under exact best responses, the potential is monotonically non-increasing (Proposition~\ref{prop_potential_descent}). When best responses are computed inexactly ($\eg$, due to relaxation looseness), our error analysis provides conditions under which the iterates converge to a neighborhood of the generalized Nash equilibrium set, with a bound on the maximum unilateral improvement available to any vehicle (Theorem~\ref{thm_approximateGNE}).

Several practical considerations affect performance and the particular equilibrium reached. For example, the limiting equilibrium point can depend on initialization and on the vehicle update order. In the current implementation, each vehicle reconstructs its GCS at every update, even though successive graphs often change only locally as other vehicles move. An important direction for future work is to warm-start and reuse graph structure across sweeps ($\eg$, caching safe-gap decompositions and updating only affected vertices/edges), which could substantially reduce per-iteration overhead. Another direction is to incorporate randomized or priority-based update schedules to mitigate order sensitivity, and to refine termination criteria based on per-vehicle regret (Section~\ref{sec_error}) rather than only on potential decrease.

Finally, while convex relaxations are frequently tight in practice, relaxation looseness can degrade best-response accuracy. Developing tighter relaxations and problem-specific certificates of tightness for driving instances remains an important avenue for future work.

\section*{Acknowledgement}
This work was supported in part by the National Science Foundation under Grants 2211548 and 2336840.

\bibliographystyle{splncs04}
\bibliography{ref}

\section{Notation}\label{sec_notation}

Table~\ref{tab_notation} summarizes the notation used throughout the paper.

\begin{longtable}{@{}>{\raggedright\arraybackslash}p{0.23\textwidth} >{\raggedright\arraybackslash}p{0.73\textwidth}@{}}
\caption{Summary of notation.}\label{tab_notation}\\
\toprule
\textbf{Symbol} & \textbf{Description} \\
\midrule
\endfirsthead
\toprule
\textbf{Symbol} & \textbf{Description} \\
\midrule
\endhead
\midrule
\multicolumn{2}{r}{\textit{Continued on next page}} \\
\endfoot
\bottomrule
\endlastfoot

\multicolumn{2}{@{}l}{\textbf{General / macros}}\\
\addlinespace[2pt]
$\mcl{X}$ & Calligraphic font, $\eg$, $\mcl{N}$, $\mcl{L}$ (set-valued objects). \\
$\mbb{R}$ & Real numbers. \\
$\ubar{(\cdot)}$, $\obar{(\cdot)}$ & Lower/upper bounds ($\eg$, $\ubar{a}_i$, $\obar{a}_i$). \\
$\des$ & ``Desired'' superscript, $\eg$, $v_i^{\des}$, $\ell_i^{\des}$. \\
$\saf$ & ``Safety'' superscript, $\eg$, $d_i^{\saf}$. \\
$\delta_{i,j}$ & Kronecker delta (1 if $i=j$, 0 otherwise). \\

\addlinespace[6pt]
\multicolumn{2}{@{}l}{\textbf{Graphs of Convex Sets (GCS)}}\\
\addlinespace[2pt]
$\mcl{G}=(\mcl{V},\mcl{E})$ & Graph of convex sets with vertices $\mcl{V}$ and directed edges $\mcl{E}$. \\
$V\in\mcl{V}$ & A vertex of the GCS graph. \\
$e=(U,W)\in\mcl{E}$ & A directed edge from vertex $U$ to vertex $W$. \\
$\mcl{E}_V^{\mathrm{in}},\ \mcl{E}_V^{\mathrm{out}}$ & Incoming/outgoing edge sets at vertex $V$. \\
$\mcl{X}_V\subset\mbb{R}^n$ & Convex set associated with vertex $V$. \\
$\mcl{X}_e\subset\mbb{R}^n\times\mbb{R}^n$ & Convex set encoding feasibility/coupling across edge $e=(U,W)$. \\
$x_V\in\mcl{X}_V$ & Continuous decision variable at vertex $V$. \\
$z_e\in\{0,1\}$ & Binary edge-selection variable for edge $e$. \\
$y_V$ & Induced vertex-selection variable (vertex is on the selected path). \\
$c_e(\cdot)$ & Convex nonnegative edge cost. \\
$c_V(\cdot)$ & Optional convex nonnegative vertex cost. \\
$V_0,\ V_K$ & Source and target vertices for the shortest path problem. \\

\addlinespace[6pt]
\multicolumn{2}{@{}l}{\textbf{Game-theoretic model}}\\
\addlinespace[2pt]
$N$ & Number of agents. \\
$\mcl{N}=\{1,\dots,N\}$ & Set of agent indices. \\
$\theta_i$ & Strategy of agent $i$. \\
$\theta_{-i}$ & Joint strategy of all agents except $i$. \\
$\theta=(\theta_i,\theta_{-i})$ & Joint strategy profile. \\
$\Theta$ & Joint feasible set (captures coupled/shared constraints). \\
$\Theta_i(\theta_{-i})$ & Feasible strategy set of agent $i$ given $\theta_{-i}$. \\
$J_i(\theta_i,\theta_{-i})$ & Cost incurred by agent $i$ at joint strategy $(\theta_i,\theta_{-i})$. \\
$G$ & The generalized game defined by the coupled optimization problems. \\
$\theta^\star$ & A generalized Nash equilibrium (GNE) strategy profile. \\
$\Phi:\Theta\to\mbb{R}$ & Generalized potential function. \\
$\BR_i(\theta_{-i})$ & Best-response mapping of agent $i$. \\

\addlinespace[6pt]
\multicolumn{2}{@{}l}{\textbf{Iterative best response (IBR) and error metrics}}\\
\addlinespace[2pt]
$k$ & IBR iteration (sweep) index. \\
$\theta^k$ & Joint strategy at the start of sweep $k$. \\
$\theta^{k,i}$ & Intermediate joint strategy after agents $1,\dots,i$ updated in sweep $k$. \\
$\epsilon_i^k$ & Best-response suboptimality for agent $i$ at sweep $k$. \\
$\bar{\epsilon}$ & Uniform bound on best-response errors. \\
$r_i(\theta)$ & Regret of agent $i$ at profile $\theta$\\
$\epsilon$ & Tolerance used in termination criteria. \\

\addlinespace[6pt]
\multicolumn{2}{@{}l}{\textbf{Highway driving model}}\\
\addlinespace[2pt]
$\mcl{L}$ & Set of lane indices. \\
$\mcl{T}=\{0,\dots,T-1\}$ & Discrete planning horizon (time indices). \\
$\Delta t$ & Time step duration. \\
$s_i(t)$ & Longitudinal position of vehicle $i \in \mcl{N}$ at time $t$. \\
$v_i(t)$ & Longitudinal speed of vehicle $i \in \mcl{N}$ at time $t$. \\
$a_i(t)$ & Longitudinal acceleration of vehicle $i \in \mcl{N}$ at time $t$. \\
$z_i(t)\in\mcl{L}$ & Lane index of vehicle $i \in \mcl{N}$ at time $t$. \\
$\mcl{B}=\{-1,0,1\}$ & Blinker command set (right, stay, left). \\
$b_i(t)\in\mcl{B}$ & Blinker (lane-change command) of vehicle $i \in \mcl{N}$ at time $t$. \\
$\ubar{a}_i,\ \obar{a}_i$ & Acceleration bounds for vehicle $i \in \mcl{N}$. \\
$\ubar{v}_i,\ \obar{v}_i$ & Speed bounds for vehicle $i \in \mcl{N}$. \\
$\ubar{s},\ \obar{s}$ & Road segment bounds in longitudinal coordinate. \\
$d_{i,j}(t)=s_j(t)-s_i(t)$ & Longitudinal separation between vehicles $i$ and $j$. \\
$z_{i,j}(t)=z_j(t)-z_i(t)$ & Relative lane offset between vehicles $i$ and $j$. \\
$d_i^{\saf}$ & Safety distance (longitudinal gap) for vehicle $i \in \mcl{N}$. \\

\addlinespace[6pt]
\multicolumn{2}{@{}l}{\textbf{Single-agent GCS construction for vehicle $i \in \mcl{N}$}}\\
\addlinespace[2pt]
$\mcl{G}_i=(\mcl{V}_i,\mcl{E}_i)$ & Vehicle-$i$ GCS graph used in its best-response subproblem. \\
$\ell\in\mcl{L}$ & Lane index in the construction. \\
$g$ & Safe-gap index within a lane at a given time. \\
$\mcl{K}_i^{j}(\ell,t)$ & Unsafe longitudinal interval around vehicle $j$ in lane $\ell$ at time $t$. \\
$\mcl{F}_i(\ell,t)$ & Union of unsafe intervals in lane $\ell$ at time $t$. \\
$\mcl{S}_i(\ell,t)$ & Collision-free longitudinal set in lane $\ell$ at time $t$. \\
$\mcl{H}_i(\ell,t)$ & Finite index set of collision-free gaps in lane $\ell$ at time $t$. \\
$\mcl{I}_i^{\ell,g}(t)$ & The $g$-th collision-free longitudinal interval in lane $\ell$ at time $t$. \\
$V_i(t,\ell,g)$ & Vertex corresponding to $(t,\ell,g)$ in $\mcl{G}_i$. \\
$x_i(t)=\begin{bmatrix}s_i(t)\\ v_i(t)\end{bmatrix}$ & Continuous state used in vertices of $\mcl{G}_i$. \\
$e=\bigl(V_i(t,\ell,g),V_i(t+1,\ell',g')\bigr)$ & Directed edge representing a feasible transition between time-adjacent vertices. \\

\addlinespace[6pt]
\multicolumn{2}{@{}l}{\textbf{Costs / preferences}}\\
\addlinespace[2pt]
$v_i^{\des}$ & Desired speed of vehicle $i \in \mcl{N}$. \\
$\ell_i^{\des}$ & Desired lane of vehicle $i \in \mcl{N}$. \\
$w_{i,v}, w_{i,\ell}, w_{i,a}, w_{i,b}$ & Cost weights for speed tracking, lane preference, acceleration effort, and blinker usage. \\

\end{longtable}

\end{document}